\newcommand{\Tr}{\operatorname{Tr}}
\newcommand{\be}{\begin{equation}}
\newcommand{\ee}{\end{equation}}
\newcommand{\ba}{\begin{eqnarray}}
\newcommand{\ea}{\end{eqnarray}}
\newcommand{\ketbra}[2]{|#1\rangle \langle #2|}
\newtheorem{theorem}{Theorem}
\newtheorem{proposition}{Proposition}
\newtheorem{lemma}{Lemma}
\begin{document}

\title{Some {\it no-go} results in Quantum Thermodynamics}

\author{Tamal Guha}
\email{g.tamal91@gmail.com}    
\affiliation{Physics and Applied Mathematics Unit, Indian Statistical Institute, 203 B.T. Road, Kolkata 700108, India.}

\author{Mir Alimuddin}
\email{aliphy80@gmail.com}    
\affiliation{Physics and Applied Mathematics Unit, Indian Statistical Institute, 203 B.T. Road, Kolkata 700108, India.}

\author{Preeti Parashar}
\email{parashar@isical.ac.in}    
\affiliation{Physics and Applied Mathematics Unit, 
	Indian Statistical Institute, 
	203 B.T. Road, Kolkata 700108, India.}

\begin{abstract}
Thermodynamics is one of the fascinating branches of traditional physics to certify the occurrence of many natural processes. On the other hand, quantum theory is the most acceptable description of the microscopic world. In the present work, we have studied how the structure of quantum theory prohibits cloning or masking of several thermodynamic quantities, viz., work and energy stored in a quantum state. Our results have important consequences in quantum partial cloning, quantum masking and on the action of quantum channels. 
\end{abstract}



\maketitle
\section{Introduction}
The laws of thermodynamics play a crucial role in characterizing the (im)possibility of executing a physical process. More precisely, any hypothetical process can be rejected if it is in disagreement with these laws. The first law is simply the energy conservation principle, whereas the second law specifies the direction of any spontaneous process \cite{zemansky}. On the other hand, the rich algebraic structure of quantum mechanics prohibits the execution of several information processing tasks \cite{nocloning, nobroadcasting, nomasking, nohiding, nonunitary, nosuperposition, tsirelsonbound, IC, nohypersignaling}. The implication of these {\it no-go} theorems as a consequence of thermodynamic principles is a field of recent interest \cite{vonNeumann, Peres, APappa'PRA, Brandao'NAT}. For example, in \cite{APappa'PRA} it is shown that Cirel'son bound of Bell-nonlocality ($B_{\mathcal{Q}}\leq2\sqrt{2}$) for quantum correlations is deeply connected to the Landaure's principle of erasing information in thermodynamics. However, it is also interesting to ask what are the restrictions imposed by quantum mechanics itself in the context of thermodynamic quantities, viz., work, internal energy, free energy etc.? In other words, how the inherent algebraic structure of quantum theory characterizes thermodynamic (im)possibilities? In this paper we have studied whether "work" and "energy" stored in a quantum state can be cloned, split or masked.\par 
Due to the presence of correlations in finite particle systems the obvious question arises about the validity of traditional thermodynamic laws in the quantum regime. It is found in the literature that these laws need to be modified in order to be relevant at the microscopic level \cite{popescu'Nat,popescu'ref,brandao'PNAS,horo'lim,bera'Nat} and to formulate an appropriate resource theory for the same \cite{brandao'PRL}. The amount of extractable work is another issue of great interest in this domain. There are mainly three different kinds of quantities associated with extractable work of a quantum state, depending upon three different accessibility scenarios. {\it First} one quantifies the average amount of extractable work using single copy of the state under unitary operations \cite{allahvardyan}. This can be asymptotically extended using large number of copies, which transforms the initial state to the corresponding same entropic minimum energetic state \cite{alicki}.
 {\it Second} one considers a single copy of a state along with a bath at lower temperature under the action of global unitary jointly on the system, bath and work qudit and transforms the system state to the bath state\cite{popescu'bath}. In the {\it third} case a single copy of the system state evolves unitarily with an assistance of a constant temperature bath to extract Renyi-$0$ free energy amount of work \cite{Horodecki'Nature}, and as an asymptotic extension the extracted work is exactly equal to the difference in free energy of the initial and final state \cite{brandao'PNAS}.  However, in this paper we are focusing on the first kind of work value stored in a quantum state. Access to single particle system prevents us to extract free energy amount of work from the closed quantum system for $d\ge3$ (which actually manifests the difference between first and last kind of extractable work) and leads to an idea of passive states \cite{lenard,skrzypzyckPRE} that contains least energy for the given spectrum.\par 
The von Neumann entropy of a quantum system makes a connection between the information content and the amount of extractable work of the state. So it is interesting to ask whether the impossibility to clone the information content in a quantum state prohibits the copying of energy or work content of the state. Here we have shown that although it is possible to clone the amount of energy content for any arbitrary quantum state, cloning of the work content is strictly prohibited. We have also shown that cloning of an energy storage is equivalent to its broadcasting. As in classical thermodynamics, it is also possible for a quantum state to split its energy content in two distinct states in any arbitrary ratio. For this we shall use energy-preserving unitary operations following the resource theoretic framework of quantum thermodynamics. Another important question in this regime is regarding the {\it gap} between locally and globally extractable amount of work. This quantity namely the {\it ergotropy gap} plays a significant role in certifying quantum entanglement present in a bipartite system \cite{huber'PRX, our'PRA}. In the extreme case of non-zero ergotropy gap the locally extractable work is zero whereas the global work is non-zero. This means the work is actually {\it masked} in the correlations of the bipartite system. We have shown that for a restricted class of states, it is possible to mask the work content of a quantum system in bipartite correlations without any thermodynamic cost. However, in general there is no universal work masking unitary, even if we allow some thermodynamical cost for its implementation.  
\section{Preliminaries}
\subsection{Evolution of the quantum state}
A general quantum state in the operator space over $d$-dim Hilbert space can be written as 
\begin{equation}
\rho=\frac{1}{d}(\mathbb{I}+\sum_{k=1}^{d^{2}-1} r_{k} \sigma_{k}), 
\end{equation}\label{1}
\\where $r_{k}$'s are the Bloch vector components and $\sigma_{k}$'s are the generalized Pauli matrices obeying Tr$[\sigma_{j}\sigma_{k}]=\delta_{jk}$ and $[\sigma_{j},\sigma_{k}]=i \epsilon_{jkl} \sigma_{l}$, with $\epsilon_{jkl}$ as the structure constants of the $SU(d)$ algebra. In a similar fashion one can define the generalized Hamiltonian as $H=\frac{1}{d}(n_{0}\mathbb{I}+\sum_{k=1}^{d^{2}-1}n_{k}\sigma_{k})$, where $\{n_{0},n_{k}\}\in \mathbb{R}$. Without loss of generality it is possible to fix the Hamiltonian axis along any of the chosen generalized Pauli matrix. Mathematically,
\begin{equation}\label{2}
 [H,\sigma_{k}]=0, 
 \end{equation}
\\where the $d$-eigenvectors of $\sigma_{k}$ denote the energy eigenstates corresponding to the energy eigenvalues of the Hamiltonian. The time evolution equation for the quantum state $\frac{\partial \rho}{\partial t}=-\frac{i}{\hbar}[H,\rho]$ can be visualized as a rotation along the $\sigma_{k}$ axis \cite{Hamiltonian'GPT}. Consider the unit Bloch Ball representation corresponding to the qubit state space ({\it see Fig\ref{fig1}}). For this case, if we associate a three dimensional vector $\vec{\rho}$ with any operator $\rho$ on the corresponding Hilbert space as $\vec{\rho}=\frac{1}{2}(r_{x},r_{y},r_{z})^{T}$ (without the co-efficient of $\mathbb{I}$), then clearly the operator $-i[H,\rho]$ has the vector representation as $(\epsilon_{jk1}n_{j}r_{k},\epsilon_{jk2}n_{j}r_{k},\epsilon_{jk3}n_{j}r_{k})^{T}\equiv \vec{H}\times\vec{\rho}$. Hence the evolution equation takes the vector form $\frac{\partial\vec{\rho}}{\partial t}\sim\vec{H}\times\vec{\rho}$. This is analogous to the rotational mechanics, in the sense that the position vector (here $\vec{\rho}$) rotates around angular velocity vector ($\vec{H}$) to generate its time evolution, i.e., the velocity ($\frac{\partial\vec{\rho}}{\partial t}$).
 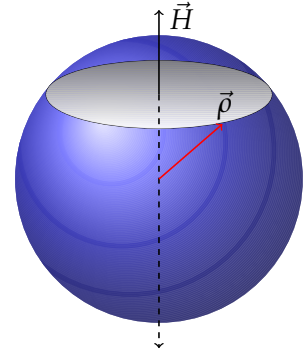
\begin{figure}[htb]
	\scalebox{0.75}{
		\begin{tikzpicture}
		\shade[ball color = blue!70, opacity = 0.5] (0,0) circle (2.55cm);
		\draw[rotate=180,black](0,-1.5)circle(2cm and 0.6cm);
		\shade[rotate=180,color = red!30](0,-1.5)circle(2cm and 0.6cm);
		\draw[->,dashed,thick](0,1.5)--(0,0)--(0,-3);
		\draw[->,thick](0,1.5)--(0,3);
		\draw [->,red!100,thick](0,0)--(1.125,0.975);
		\node at (1.17,1.3) {\Large$\vec{\rho}$};
		\node at (0.4,2.9) {\Large$\vec{H}$};
			\end{tikzpicture}}
	\caption{(Color on-line) Time evolution of a pure state vector $\vec{\rho}$ along the circular path about the vector $\vec{H}$ is depicted on the surface of the Bloch Sphere. All the states residing on the disk will have constant energy.}
	\label{fig1}
\end{figure}
    \subsection{Energy of a quantum state}
The amount of energy stored in a quantum state is given by $E(\rho)=Tr(\rho H)$ which can be simplified as $\frac{1}{d}(n_{0}+\sum_{k=1}^{d^{2}-1} r_{k}n_{k})$ due to the {\it traceless} nature of all the generalized Pauli matrices. However, using the simplified form in Eq.\eqref{2} the energy will be
\begin{equation}\label{3}
E(\rho)= Tr(\rho H)=\frac{1}{d}(n_{0}+n_{k}r_{k}).
\end{equation}
 Hence the energy of a quantum state $\rho$ in $\mathbb{R}^{d^{2}-1}$ Bloch ball is the projection of $\vec{\rho}$ along the $\vec{H}$ direction. It is clear that Eq.\eqref{3} puts a constraint on the $d^{2}-1$ free parameters of a general qudit state. Hence, all the $d$-dimensional quantum states having same energy is lie on a $d^{2}-2$ dimensional hyperplane inscribed in a $\mathbb{R}^{d^{2}-1}$ Bloch ball. It is easy to show that the {\it constant energetic} hyperplane inscribed inside the generalized Bloch ball is {\it convex} in nature. More precisely, if the $d-$dimensional quantum states $\rho$ and $\sigma$ both have equal amount of energy, i.e., $Tr(\rho H)= Tr(\sigma H)= E$, then for any $\tau=p\rho+(1-p)\sigma$, where $0\leq p\leq1$, the energy of $\tau$ can be written as, $E(\tau)=Tr(\tau H)=Tr((p\rho+(1-p)\sigma)H)=pE(\rho)+(1-p)E(\sigma)=E$ by using linearity of {\it trace}. This exhibits the convex structure of equi-energetic quantum states. However, it is worth noticing that another important thermodynamic quantity, the free energy of a quantum state, involves both energy and entropy, where the latter prevents the set of states with equal free energy to be convex in nature.
 In Fig.\ref{fig1} states having same energy lie on the entire hyperplane whereas states of equal free energy reside only on the boundary of the hyperplane. 
 \subsection{Extraction of work}
 Work extraction from a quantum state is a subject of primary focus in thermodynamics. When the concerned system is isolated from the universe, the time evolution of the operator expectation value, i.e., 
 \begin{equation}\label{4}
 \frac{d\langle \hat{A}\rangle}{dt}=\frac{i}{\hbar}\langle[\hat{H},\hat{A}]\rangle+\langle\frac{\partial \hat{A}}{\partial t}\rangle
 \end{equation}
  demands that the energy of the state evolving under time independent Hamiltonian should be preserved. However, the action of a properly chosen time varying potential $V(t)$ can lower its energy. Obviously the evolution is unitary as the applied potential acts only on the system state. So the time dependent Hamiltonian can be written as, $\vec{H}^{'}=\vec{H}+\vec{V}(t)$ under which the state will evolve to reach the corresponding passive state. If the spectral form of the Hamiltonian is given by $H=\sum_{k=1}^{n}\epsilon_{k}\ket{k}\bra{k}$, then for the state $\rho=\sum_{k=1}^{n}p_{k}\ket{\psi_{k}}\bra{\psi_{k}}$ the passive form will be $\rho_{p}=\sum_{k=1}^{n}p_{k}\ket{k}\bra{k}$, where $\{p_{k}\}_{k=1}^{n}$ is the probability distribution arranged in non increasing order. Due to the action of $V(t)$, the state vector
   $\vec{\rho}$ evolves around the vector $\vec{H}^{'}$ and when at $t=\tau$ it reaches the corresponding passive state vector $\vec{\rho}_{p}$ the external potential $V(t)$ gets switched off. A simple calculation exhibits that the extraction of work in this process is given by
  \begin{equation}\label{5}
  	W= Tr(\rho H)-Tr(\rho_{p} H)\\
  	 =\sum_{k,l=1}^{n}p_{k}\epsilon_{l}(|\langle\psi_{k}|l\rangle|^{2}-\delta_{kl}).
  \end{equation}
 The above equation quantifies the amount of extractable work under unitary evolution on a given quantum state mentioned as the first kind of work in the introduction and henceforth by "work" we will mean the same.
  See Fig \ref{fig2} to get a clear view for qubit scenario.
   \begin{figure}[htb]
  	\scalebox{0.75}{
  		\begin{tikzpicture}
  		\draw [->,black!100,thick](0,0)--(1.5,0.5);
  		\shade[ball color = red!50, opacity = 0.3] (0,0) circle (2.55cm);
  		\shade[ball color = yellow!50, opacity = 0.5] (0,0) circle (1.6cm);
  		\draw(0.75,1.040575)circle[x radius=0.923839cm, y radius=0.13cm, rotate=140];
  		\draw[->,dashed,thick](0,0)--(0.75*3,1.040575*3);
  		\node at (0.75*3+0.5,1.040575*3+0.5){\Large$\vec{H}+\vec{V}(t)$};
  		\node at (0.5,1.8){\Large$\rho_{p}$};
  		\draw[->,dashed,thick](0,1.5)--(0,0)--(0,-3);
  		\draw[->,thick](0,1.5)--(0,3);
  		  		\node at (1.7,0.7) {\Large$\vec{\rho}$};
  		\node at (0.4,2.9) {\Large$\vec{H}$};
  		\end{tikzpicture}}
  	\caption{(Color on-line) Time evolution of a state vector $\vec{\rho}$ along the circular path around the vector $\vec{H}+\vec{V}(t)$ is depicted. The rotation will occur on the surface of a smaller sphere of radius exactly equals to that of $\vec{\rho}$ to certify the evolution as a unitary. At time $\tau$, state $\rho$ reaches to $\rho_{p}$ and as a result energy difference amount of work can be extracted.}
  	\label{fig2}
  \end{figure}
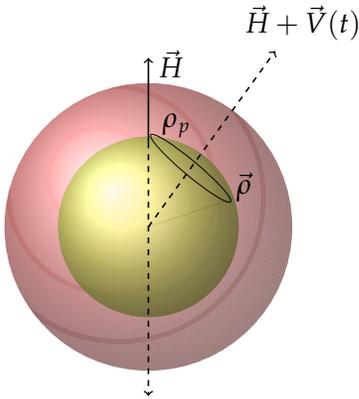
  \section{Cloning and Splitting of activity and energy contents}
 
  The amount of extractable work for any arbitrary quantum state characterizes the amount of energy stored in the state with respect to the state of minimum energy with identical spectrum. In the $\mathbb{R}^{3}$ Bloch ball representation all the states residing on the circumference of any circle (we call it the {\it circle of equal work}) of the plane perpendicular to the Hamiltonian axis and centering the axis itself, possess equal work content. 
  \begin{theorem}
  There is no universal work cloner for arbitrary quantum states.
  \end{theorem}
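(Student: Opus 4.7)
The plan is to argue by contradiction already in the qubit case, since a single qubit obstruction suffices to rule out universality. I will assume there exists a unitary $U$ acting on a system qubit together with an ancilla prepared in a fixed blank state $|b\rangle$, such that for every input state $\rho$ both reduced states of $U(\rho\otimes|b\rangle\langle b|)U^\dagger$ carry work equal to $W(\rho)$. The strategy is to combine unitarity, linearity of $U$, and the extremality of the excited-state work to pin down the action of $U$ on the energy eigenstates, and then expose an incompatibility for a generic real superposition.

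The first step is to identify $|\Phi_j\rangle:=U|e_j\rangle|b\rangle$ for $j=0,1$. Using the qubit work formula $W(\rho)=\tfrac{1}{2}\Delta\epsilon(|\vec r|-r_z)$, one sees that $W\le\Delta\epsilon$ with equality attained only by $|e_1\rangle\langle e_1|$. Hence both marginals of $|\Phi_1\rangle$ must equal $|e_1\rangle\langle e_1|$, and purity of the bipartite output combined with pure marginals forces $|\Phi_1\rangle=|e_1\rangle|e_1\rangle$ up to a global phase. For $|\Phi_0\rangle$, both marginals must be passive (off-diagonals vanishing in the energy basis with non-increasing populations), and orthogonality with $|\Phi_1\rangle$ kills the $|e_1\rangle|e_1\rangle$ component. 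A short case analysis then restricts $|\Phi_0\rangle$ to one of two forms (up to phases): either the product $|e_0\rangle|e_0\rangle$ or the entangled state $\tfrac{1}{\sqrt 2}\bigl(|e_0\rangle|e_1\rangle+e^{i\delta}|e_1\rangle|e_0\rangle\bigr)$.

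Next I would test linearity on the one-parameter family $|\psi_\theta\rangle=\cos\theta\,|e_0\rangle+\sin\theta\,|e_1\rangle$, whose work is $\Delta\epsilon\sin^2\theta$. Linearity of $U$ yields $U|\psi_\theta\rangle|b\rangle=\cos\theta\,|\Phi_0\rangle+\sin\theta\,|e_1\rangle|e_1\rangle$. I will then compute the reduced density operator of this output in the energy basis, extract its Bloch coordinates, and apply the qubit work formula. For the product choice the marginal is diagonal with populations $(\cos^2\theta,\sin^2\theta)$; its work vanishes for $\theta\le\pi/4$ and equals $\Delta\epsilon(\sin^2\theta-\cos^2\theta)$ for $\theta>\pi/4$, neither matching $\Delta\epsilon\sin^2\theta$ away from the trivial endpoints. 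For the entangled choice the marginal acquires a transverse component $r_x=\sqrt 2\,\sin\theta\cos\theta$ together with $r_z=-\sin^2\theta$, giving marginal work $\tfrac{1}{2}\Delta\epsilon\bigl(\sin\theta\sqrt{1+\cos^2\theta}+\sin^2\theta\bigr)$; equating this to $\Delta\epsilon\sin^2\theta$ forces $\cos\theta=0$, again only the trivial endpoint. In either case the cloning identity fails on the open interval $\theta\in(0,\pi/2)$, which is the desired contradiction.

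The main bookkeeping obstacle will be the case analysis on $|\Phi_0\rangle$: passivity on both sides must be combined with the orthogonality $\langle\Phi_0|\Phi_1\rangle=0$ carefully enough to confirm that the two listed forms exhaust the possibilities. One also has to verify that the free phase $e^{i\delta}$ in the entangled case only reshuffles the imaginary part of the relevant off-diagonal and therefore leaves $|\vec r|$ and $r_z$ of both marginals unchanged, so that the contradiction survives for every $\delta$. Extending the no-go statement to dimensions $d\ge 3$ requires no new idea: restricting the cloner to the two-dimensional subspace $\mathrm{span}\{|e_0\rangle,|e_{d-1}\rangle\}$ reduces to the qubit argument, with $|e_{d-1}\rangle$ playing the role of unique work maximiser within that subspace.
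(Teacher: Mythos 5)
Your proposal is correct, but it proves the theorem by a genuinely different route than the paper. The paper's proof is a no-signaling argument: Alice and Bob share a singlet, Alice measures in $\sigma_z$ or $\sigma_x$, and Bob applies the hypothetical work cloner $\mathcal{D}$ to his marginal; since $|0\rangle$ and $|1\rangle$ sit alone on their ``circles of equal work'' the device must clone them exactly, while on $|+\rangle$ it may only land somewhere on the equal-work circle, producing output phases $\phi_1,\phi_2$ — and the resulting ensembles $\sigma_1\neq\sigma_2$ would let Bob read off Alice's basis choice, violating no-signaling. You instead run a direct unitarity-plus-linearity argument: extremality of the ergotropy pins $U|e_1\rangle|b\rangle=|e_1\rangle|e_1\rangle$, passivity plus orthogonality restricts $U|e_0\rangle|b\rangle$ to the two forms you list (your case analysis is correct: $a\bar c=0$, $a\bar b=0$ and the population inequalities force either $|e_0e_0\rangle$ or the balanced $\bigl(|e_0e_1\rangle+e^{i\delta}|e_1e_0\rangle\bigr)/\sqrt{2}$), and then the marginal work of $\cos\theta\,|\Phi_0\rangle+\sin\theta\,|e_1e_1\rangle$ fails to equal $\Delta\epsilon\sin^2\theta$ on all of $\theta\in(0,\pi/2)$; the phase freedoms only rotate off-diagonals and leave $|\vec r|$ and $r_z$ untouched, as you note. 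What each approach buys: the paper's argument needs the device to work on only three states and ties the impossibility to a physical principle, and it does not assume the device is an isometry (though it implicitly assumes the clones come out as a pure product of equal-work pure states); yours is self-contained within unitary quantum mechanics, fully explicit, and shows the cloning identity fails for every nontrivial real superposition, but it presupposes the cloner is a unitary with a fixed blank ancilla so that the joint clone output is pure — covering a general CPTP cloner would need an extra dilation-style step. One small caveat: your $d\ge 3$ remark is quicker than it looks, since $U$ restricted to $\mathrm{span}\{|e_0\rangle,|e_{d-1}\rangle\}$ need not map into the corresponding $2\otimes 2$ block (passivity of the marginals must then be analysed in the full $d$-level energy basis); however, the qubit obstruction alone already refutes a cloner universal over arbitrary states, which is the same level of generality the paper itself settles for.
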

\begin{proof}
	Here we will show that the existence of a work cloning device for three different quantum states leads to a violation of the no-signaling principle. Although, our proof involves $\mathbb{C}^{2}$ Hilbert space it can be extended easily for arbitrary dimensional quantum states. To prove the above theorem, let's consider a device $\mathcal{D}$ which can clone the extractable work for atleast three pure qubits, namely $\ket{0}, \ket{1}$ and $\ket{+}=\frac{\ket{0}+\ket{1}}{\sqrt{2}}$. Now we will consider a bipartite scenario where a singlet state $\ket{\psi^{-}}$ is shared between two distant parties Alice and Bob. If Alice measures her qubit in any arbitrary basis then the reduced state of Bob's particle remains unaltered (i.e., $\frac{I}{2}$). However, if Bob's state changes with Alice's measurement choice, then he can measure his qubit and guess Alice's measurement with fidelity greater than $\frac{1}{2}$, which leads to the superluminal transmission of information.\par
	But the situation is different if Bob has access to the work cloning device $\mathcal{D}$. Suppose Alice measures her qubit in $\sigma_{z}$ basis. As a result Bob's reduced marginal will be $\rho_{1}=\frac{1}{2}\ketbra{0}{0}+\frac{1}{2}\ketbra{1}{1}$. Now the action of $\mathcal{D}$ on $\ket{0}$, or $\ket{1}$ will be exactly like cloning, as for these two states their circle of equal work is a unique point. So, after the action of $\mathcal{D}$ on Bob's side, the state becomes $\sigma_{1}=\frac{1}{2}\ketbra{00}{00}+\frac{1}{2}\ketbra{11}{11}$. On the other hand, if Alice makes a $\sigma_{x}$ measurement on her qubit, Bob's state will be $\rho_{2}=\frac{1}{2}\ketbra{+}{+}+\frac{1}{2}\ketbra{-}{-}$. Now, the action of $\mathcal{D}$ on $\ket{+}$ will be $\mathcal{D}:\ket{+}\to \frac{(\ket{0}+e^{-i\phi_{1}}\ket{1})}{\sqrt{2}}\otimes\frac{(\ket{0}+e^{-i\phi_{2}}\ket{1})}{\sqrt{2}}$, where $\phi_{1},\phi_{2}\in(0,2\pi]$. So, the action of $\mathcal{D}$ on $\rho_{2}$ will produce $\sigma_{2}= \frac{1}{8}(\ketbra{0}{0}+e^{-i\phi_{1}}\ketbra{1}{0}+e^{i\phi_{1}}\ketbra{0}{1}+\ketbra{1}{1})\otimes(\ketbra{0}{0}+e^{-i\phi_{2}}\ketbra{1}{0}+e^{i\phi_{2}}\ketbra{0}{1}+\ketbra{1}{1})+\frac{1}{2}\rho$, where, $\rho=\mathcal{D}\ketbra{-}{-}$ is an arbitrary bipartite quantum state. For our analysis, the action of $\mathcal{D}$ on $\ketbra{-}{-}$ is not required. However, independent of $\rho$  (but due to the fact that $\rho\geq0$), it is evident that due to the presence of other diagonal terms $\sigma_{2}\neq\sigma_{1}$. As a consequence it is possible for Bob to choose a proper measurement setting to discriminate optimally between these two states. This would imply the violation of no-signaling principle. Hence it is impossible to clone the amount of extractable work for arbitrary quantum states.
\end{proof}
However, it is possible to clone the energy content of a quantum state without maintaining its spectrum i.e., the entropy. 
 \begin{proposition}
	For any arbitrary quantum state with a given Hamiltonian, it is possible to clone its energy content.
\end{proposition}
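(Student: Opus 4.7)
The plan is to construct an explicit completely positive map that sends any input $\rho$ to a bipartite output $\sigma_{AB}$ whose two marginals both have energy $\Tr(\rho H)$. The central observation is that $E(\rho)=\Tr(\rho H)$ depends only on the diagonal matrix elements of $\rho$ in the energy eigenbasis $\{\ket{k}\}_{k=1}^{d}$ of $H$. Since the proposition does not demand that the marginals share the spectrum of $\rho$, this amounts to a single linear constraint on the output, which can be met without any tension with the standard no-cloning theorem.

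Concretely, I would fix the spectral decomposition $H=\sum_{k}\epsilon_{k}\ketbra{k}{k}$, introduce a $d$-dimensional ancilla $B$ prepared in $\ket{0}_{B}$, and apply the generalised controlled-shift unitary $U\ket{k}_{A}\ket{j}_{B}=\ket{k}_{A}\ket{k\oplus j}_{B}$. On the initial ancilla this acts as the isometry $V\ket{k}=\ket{k}\ket{k}$, producing
\[
\sigma_{AB}=V\rho V^{\dagger}=\sum_{k,l}\rho_{kl}\,\ketbra{k}{l}\otimes\ketbra{k}{l}.
\]
Taking partial traces, the orthogonality relation $\langle l|k\rangle=\delta_{lk}$ kills every off-diagonal term and leaves
\[
\sigma_{A}=\sigma_{B}=\sum_{k}\rho_{kk}\,\ketbra{k}{k},
\]
so $E(\sigma_{A})=E(\sigma_{B})=\sum_{k}\rho_{kk}\epsilon_{k}=\Tr(\rho H)=E(\rho)$, exactly as required.

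I do not anticipate a genuine obstacle: the construction is essentially the standard ``copy the diagonal'' CNOT, and it generates only classical correlations between the two output registers. The point worth emphasising is the contrast with Theorem~1. Cloning the work would further require each marginal to carry the same spectrum as $\rho$ (equivalently, to lie on the same circle of equal work), and this additional condition is not a linear functional of the input, so it cannot be enforced by any fixed channel acting on an arbitrary $\rho$. Energy cloning is strictly weaker, and the generalised CNOT above already suffices; what I would still double-check in writing the proof is simply that the same prescription goes through verbatim for mixed inputs and for arbitrary (non-degenerate or degenerate) Hamiltonians, since in either case only the diagonal $\{\rho_{kk}\}$ enters the final energy.
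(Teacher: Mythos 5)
Your proposal is correct and takes essentially the same route as the paper: both use the unitary that copies the energy eigenbasis, $\mathbf{U}\ket{k}\ket{0}=\ket{k}\ket{k}$, and observe that the resulting marginals are the diagonal of the input in that basis, hence carry the same energy. Your version treats mixed inputs directly via $V\rho V^{\dagger}$, while the paper first argues for pure states and then extends to mixtures by convexity, but this is only a presentational difference.
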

\begin{proof}
	Given any arbitrary quantum state $\psi$ we can write it in a linear combination of energy eigen-basis. Explicitly, $\ket{\psi}=\sum_{k=1}^{d}c_{k}\ket{k}$, where $\{\ket{k}\}_{k=1}^{d}$ are the orthonormal energy eigen-basis. Hence there obviously exists a universal cloner $\mathbf{U}$ for this set of states, i.e., $\mathbf{U}\ket{k}\ket{0}=\ket{k}\ket{k}$. \par 
	The action of the global unitary can be written as $\mathbf{U}\ket{\psi}_{A}\ket{0}_{B}=\sum_{k=1}^{d}c_{k}\ket{k_{A}k_{B}}$. The reduced marginal of this state after the evolution will be $\rho_{A}=\rho_{B}=\sum_{k=1}^{d}|c_{k}|^{2}\ketbra{k}{k}$. Evidently, the energy corresponding to the marginals is given by $\sum_{k=1}^{d}|c_{k}|^{2}\epsilon_{k}$, where $\epsilon_{k}$'s are the energy eigen-values corresponding to the Hamiltonian $H_{A}$ and $H_{B}$. This is exactly equal to the energy of the initial state $\ket{\psi}$. The same unitary can clone the energy of any other quantum state also, in this sense $\mathbf{U}$ is a universal energy cloner.
\end{proof}
It is interesting to note that due to the energy-cloning possibilities of arbitrary pure quantum states, the same is true for their convex combinations also. Precisely, consider a quantum state $\rho=p\ketbra{\psi_{1}}{\psi_{1}}+(1-p)\ketbra{\psi_{2}}{\psi_{2}}$ and apply the same unitary $\mathbf{U}$ on the composite quantum state $\rho\otimes\ketbra{0}{0}$. The marginal states given by $\rho_{1}=\rho_{2}=\sum_{k=1}^{d}[p|c_{i}^{(1)}|^{2}+(1-p)|c_{i}^{(2)}|^{2}]\ketbra{i}{i}$ have energy $\sum_{k=1}^{d}[p|c_{i}^{(1)}|^{2}+(1-p)|c_{i}^{(2)}|^{2}]E_{i}$, which is exactly equal to the energy stored in the initial quantum state $\rho$.\par
In \cite{noinfosplit} it was shown that the information content of a single qubit can not be split in two different qubits. In other words, it is impossible to encode the information of a qubit contained in $\theta$ and $\phi$ in two different qubits simultaneously. As a consequence, from the thermodynamic perspective we can ask whether it is possible to split the energy content of a $d$-dimensional quantum state in two different qudits with arbitrary fraction? Obviously the action should be under an energy preserving unitary. In the following we will design such a unitary. 
\begin{proposition}
For any $d$-dimensional quantum state, energy-splitting is possible with any arbitrarily chosen fraction.
\end{proposition}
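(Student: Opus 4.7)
The plan is to exhibit, for any target fraction $\alpha\in[0,1]$, an energy-preserving unitary $U_\alpha$ acting on $\mathcal{H}_A\otimes\mathcal{H}_B$ (with $B$ an ancilla prepared in the ground state and both subsystems sharing the same Hamiltonian $H$) such that the marginals of $U_\alpha(\rho\otimes\ketbra{0}{0})U_\alpha^\dagger$ carry energies $(1-\alpha)E(\rho)$ and $\alpha E(\rho)$ respectively. I would start by expanding $\rho$ in the energy eigenbasis $\{\ket{k}\}_{k=0}^{d-1}$ and fixing the reference so that $\epsilon_0=0$, which is without loss of generality because shifting $H$ by a multiple of $\mathbb{I}$ leaves work extraction invariant; this guarantees that the ancilla contributes no energy offset to the composite state.

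Next I would construct $U_\alpha$ block by block. Define it on the ``excitation transfer'' blocks by $U_\alpha\ket{k}_A\ket{0}_B=\sqrt{1-\alpha}\,\ket{k}_A\ket{0}_B+\sqrt{\alpha}\,\ket{0}_A\ket{k}_B$ for every $k\geq 1$, fix $\ket{0}_A\ket{0}_B$, and complete each such two-dimensional block by $\ket{0}_A\ket{k}_B\mapsto-\sqrt{\alpha}\,\ket{k}_A\ket{0}_B+\sqrt{1-\alpha}\,\ket{0}_A\ket{k}_B$. On the remaining basis vectors $\ket{k}_A\ket{j}_B$ with $k,j\neq 0$, take $U_\alpha$ to be the identity (or any unitary acting within each total-energy eigenspace). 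Because every block lies inside a single eigenspace of $H_A\otimes\mathbb{I}+\mathbb{I}\otimes H_B$, the resulting $U_\alpha$ commutes with the total Hamiltonian and hence qualifies as a thermodynamically free, energy-conserving unitary in the resource-theoretic sense used throughout the paper.

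The splitting itself would then be verified by a short partial-trace computation on pure inputs. For $\ket{\psi}=\sum_k c_k\ket{k}$ one has $U_\alpha\ket{\psi}_A\ket{0}_B=c_0\ket{0}_A\ket{0}_B+\sum_{k\geq 1}c_k(\sqrt{1-\alpha}\,\ket{k}_A\ket{0}_B+\sqrt{\alpha}\,\ket{0}_A\ket{k}_B)$. A direct evaluation gives $\langle k|\rho_A|k\rangle=(1-\alpha)|c_k|^2$ and $\langle k|\rho_B|k\rangle=\alpha|c_k|^2$ for $k\geq 1$, from which $E(\rho_A)=(1-\alpha)E(\psi)$ and $E(\rho_B)=\alpha E(\psi)$ follow by multiplying with $\epsilon_k$ and summing. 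The extension to arbitrary mixed inputs $\rho=\sum_i p_i\ketbra{\psi_i}{\psi_i}$ is then immediate, since $U_\alpha$ is state-independent and both the partial trace and $E(\cdot)=\Tr(\cdot\,H)$ are linear.

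I expect the subtle point to be the argument that the block-wise prescription on $\{\ket{k}_A\ket{0}_B\}_k$ genuinely extends to a bipartite unitary commuting with $H_A+H_B$ for arbitrary spectra. The different $k$-blocks live in distinct total-energy eigenspaces, so they are automatically mutually orthogonal and the extension is consistent; accidental degeneracies among the $\{\epsilon_k\}$ or among the sums $\{\epsilon_k+\epsilon_j\}$ merely enlarge the eigenspace being filled but never force a nontrivial coupling between the transfer blocks and the residual sector. It is this decoupling, together with the $\epsilon_0=0$ convention, that makes the construction universal in both the splitting fraction and the input state.
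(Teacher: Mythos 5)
Your construction is essentially identical to the paper's: the same block unitary $\ket{k}\ket{0}\mapsto\sqrt{p}\,\ket{k}\ket{0}+\sqrt{1-p}\,\ket{0}\ket{k}$ (with the roles of $p$ and $1-p$ swapped), followed by the same partial-trace computation of the marginal energies. Your additional care in completing the unitary on the orthogonal sectors, checking commutation with the total Hamiltonian in the presence of degeneracies, and extending to mixed inputs by linearity only makes explicit what the paper leaves implicit, so the proposal is correct.
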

\begin{proof}
	Consider any arbitrary quantum state $\ket{\psi}=\sum_{k=0}^{d-1}c_{k}\ket{k}$ governed by the Hamiltonian $H=\sum_{k=1}^{d-1}\epsilon_{k}\ketbra{k}{k}$ (scaling the ground state energy to be zero). Therefore, the energy corresponding to the quantum state $\ket{\psi}$ is $\sum_{k=1}^{d-1}|c_{k}|^{2}\epsilon_{k}$.\\
	Now, in order to split the energy content of $\ket{\psi}$ into two different qudits in the ratio $\{p,(1-p)\}$, we choose the unitary action $\mathbf{U}\ket{k0}= \sqrt{p} \ket{k0}+\sqrt{1-p}\ket{0k}$. Then $\mathbf{U}\ket{\psi}_{S}\ket{0}_{A}= c_{0}\ket{0_{S}0_{A}}+\sum_{k=1}^{d-1}c_{k}(\sqrt{p}\ket{k_{S}0_{A}}+\sqrt{1-p}\ket{0_{S}k_{A}})$. Hence, the energy for  the system and ancillary qudit are $E_{S}=\Tr(\rho_{S}H)$ and $E_{A}=\Tr(\rho_{A}H)$ respectively, where $\rho_{S(A)}$ is the final reduced system (ancillary) state. It is easy to observe that $E_{S}=\sum_{k=1}^{d-1}p|c_{k}|^{2}\epsilon_{k}$ whereas, $E_{A}=\sum_{k=1}^{d-1}(1-p)|c_{k}|^{2}\epsilon_{k}$. Hence energy of the quantum state is split up in the required ratio.
\end{proof}
\section{Work Masking}
In quantum thermodynamics the difference between local and global extractable work termed as ergotropic gap plays a significant role in identifying the structure of a quantum state shared between its constituents. In particular, it is shown that separability of a bipartite quantum state invokes an upper bound on this quantity \cite{our'PRA}. The importance of those bipartite states for which local marginals are passive has also been studied. In this context, we ask the question whether it is possible to mask the work content of any arbitrary quantum state within the correlations of its bipartite extension. More precisely, given any arbitrary quantum state $\ket{\psi}$ along with a machine state, is it possible to design a unitary action such that the final bipartite state is locally passive in nature? We can also ask about the existence of a general quantum operation to perform the same work masking. 
\begin{lemma}
There is no universal work masking unitary which is energy preserving in nature.
\end{lemma}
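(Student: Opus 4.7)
I would argue by contradiction. Assume there is an energy-preserving unitary $U$ (so $[U,\,H_S\otimes\mathbb{I}+\mathbb{I}\otimes H_M]=0$) and a fixed machine state $\ket{M}$ such that $U(\ket{\psi}_S\otimes\ket{M})$ has locally passive marginals for every input $\ket{\psi}_S$. The plan is to play three structural constraints off against one another: orthogonality of outputs coming from orthogonal inputs, the Schmidt rigidity implied by local passivity, and the linear energy balance forced by $[U,H_S+H_M]=0$.

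As a first step I would feed in the system energy eigenbasis $\{\ket{j}_S\}_{j=0}^{d_S-1}$ and study the outputs $\ket{\Phi_j}:=U(\ket{j}_S\otimes\ket{M})$. Unitarity makes the $\ket{\Phi_j}$'s orthonormal, while energy conservation gives $\Tr[\ket{\Phi_j}\bra{\Phi_j}(H_S+H_M)]=\epsilon_j^S+E_M$. Next I would invoke the rigidity of pure bipartite states whose marginals are passive: each marginal must be diagonal in its own energy eigenbasis and supported on a contiguous block of lowest energies (a gap in the support would violate the non-increasing-in-energy condition). A direct Schmidt-decomposition argument then forces
\[
\ket{\Phi_j}\;=\;\sum_{i=0}^{d_S-1}\sqrt{\mu_i^{(j)}}\,e^{i\gamma_i^{(j)}}\,\ket{i}_S\ket{i}_M,
\]
with $\mu_0^{(j)}\geq\mu_1^{(j)}\geq\cdots$ and the pairing $\ket{i}_S\leftrightarrow\ket{i}_M$ dictated by the passivity of both marginals. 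A quick dimension count already disposes of $d_M<d_S$, since one cannot fit $d_S$ orthonormal vectors in the constrained Schmidt subspace.

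The heart of the argument, and the step I expect to be the main obstacle, is to prove that the $d_S\times d_S$ matrix $V$ with entries $V_{ij}=\sqrt{\mu_i^{(j)}}\,e^{i\gamma_i^{(j)}}$ must be uniform in magnitude, $|V_{ij}|^2=1/d_S$. Orthonormality of the $\ket{\Phi_j}$ makes $V$ unitary, so every row of $|V_{ij}|^2$ sums to $1$. The column-wise passivity chain $|V_{0j}|^2\geq|V_{1j}|^2\geq\cdots\geq|V_{d_S-1,j}|^2$, combined with the equal row sums $\sum_j|V_{0j}|^2=\sum_j|V_{d_S-1,j}|^2=1$, forces each column of $|V_{ij}|^2$ to be constant term by term; the column-sum condition then pins the common value at $1/d_S$. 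The point needing care is the handling of degenerate spectra or zero Schmidt weights, where the alignment between Schmidt and energy bases leaves some residual freedom that must be shown not to rescue the construction.

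Substituting this uniform $|V|^2$ back into the energy balance gives a total energy $\frac{1}{d_S}\sum_i(\epsilon_i^S+\epsilon_i^M)$ for every $\ket{\Phi_j}$, independent of $j$; yet energy conservation demanded it to equal $\epsilon_j^S+E_M$. For any $H_S$ with at least two distinct eigenvalues this clash is fatal and delivers the contradiction. The case of a mixed machine state is handled by a standard purification into an auxiliary system and running the same argument in the enlarged Hilbert space.
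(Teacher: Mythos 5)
There is a genuine gap, and it sits exactly at the step you yourself flagged: the claim that passivity of both marginals forces the diagonal Schmidt pairing $\ket{\Phi_j}=\sum_i\sqrt{\mu_i^{(j)}}\,e^{i\gamma_i^{(j)}}\ket{i}_S\ket{i}_M$. That rigidity holds only when the Schmidt spectrum is nondegenerate; when the marginals are degenerate (in particular maximally mixed) the pairing is completely free, and this is not a removable technicality but precisely the loophole an energy-preserving masker exploits. Concretely, take two qubits with $H=\epsilon\ketbra{1}{1}$ on each side and the energy-preserving assignment $\ket{00}\to\ket{00}$, $\ket{10}\to\frac{1}{\sqrt{2}}(\ket{01}+\ket{10})$: both outputs have passive marginals, the second has the \emph{anti}-diagonal pairing, and the output energies are $0$ and $\epsilon$, not constant. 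So your matrix $V$ is not forced to have uniform modulus, the ``constant output energy'' clash never materialises, and no contradiction can be reached from eigenbasis inputs alone. Indeed it cannot in principle: Proposition 3 of the paper explicitly constructs an energy-preserving unitary that masks every state diagonal in the energy basis --- in particular all energy eigenstates --- so any argument that only feeds in $\{\ket{j}_S\}$ is doomed from the start.

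The missing ingredient is coherence: the obstruction appears only for superposition inputs, which your proposal never uses. The paper's route (qubit case) is to note that energy conservation pins $\mathbf{U}\ket{00}=\ket{00}$ and, with unitarity, $\mathbf{U}\ket{10}=b\ket{01}+c\ket{10}$; linearity then gives, for the input $(\alpha\ket{0}+\beta\ket{1})\ket{0}$, the output $\alpha\ket{00}+\beta b\ket{01}+\beta c\ket{10}$, whose system and machine marginals carry off-diagonal terms proportional to $\alpha\bar{\beta}\bar{c}$ and $\alpha\bar{\beta}\bar{b}$ in the energy basis. Passivity with a nondegenerate Hamiltonian forbids such coherences for all $\alpha,\beta$, forcing $b=c=0$ and contradicting unitarity. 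To salvage your strategy you would have to combine your characterisation of the eigenbasis outputs with superposed inputs and show that the resulting cross terms cannot all vanish --- which is essentially the paper's argument.
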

\begin{proof}
We will demonstrate the proof for qubit case since the extension to higher dimensions is easy. Let's consider a pure qubit $\ket{\psi}=\alpha\ket{0}+\beta\ket{1}$ and be a machine state $\ket{0}$ of the same system. The energy preservation criterion on the global unitary demands $\ket{00}\to\ket{00}$, where within the $\ket{..}$ symbol, the first one denotes the system particle and the second one as the machine state. \par
Similarly, the action on $\ket{10}$ will be in general, $\ket{10}\to (a\ket{00}+b\ket{01}+c\ket{10}+d\ket{11})$. Now, the inner product preservation of the unitary demands $a=0$, which due to the energy preservation criterion implies that $d=0$. So action of the unitary will be $\mathbf{U}\ket{00}=\ket{00}$ and $\mathbf{U}\ket{10}=(b\ket{01}+c\ket{10})$. But for the final state produced under this operation, the marginals will have non-vanishing coherence in energy eigen basis, which will prevent them to be passive in nature.
\end{proof}
It remains open to identify the class of states for which the work masking is possible under energy conserving unitary evolution. For those states which are diagonal in energy eigen basis, it is possible to mask their energy in the correlation they will share in bipartite extension. In the following we will construct a unitary to do so. 
\begin{proposition}
	For any arbitrary mixed state diagonal with the governing equally spaced Hamiltonian, it is possible to mask its work content in the bipartite extension.
\end{proposition}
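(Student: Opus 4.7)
The plan is to exhibit an explicit energy-preserving joint unitary whose action on the diagonal state (tensored with a ground-state ancilla) produces a bipartite state with passive marginals. Write the input as $\rho = \sum_{k=0}^{d-1} p_{k}\ketbra{k}{k}$ in the energy eigenbasis of $H = \sum_{k} k\epsilon\, \ketbra{k}{k}$, where the ground-state energy has been scaled to zero and the common spacing is $\epsilon > 0$. My candidate unitary is defined on the subspace $\mathrm{span}\{\ket{k}_{S}\ket{0}_{A}\}$ by
\begin{equation}
\mathbf{U}\ket{k}_{S}\ket{0}_{A} \;=\; \frac{1}{\sqrt{k+1}}\sum_{j=0}^{k} \ket{j}_{S}\ket{k-j}_{A},
\end{equation}
and then extended to the whole of $\mathcal{H}_{S}\otimes\mathcal{H}_{A}$ by completing $\{\mathbf{U}\ket{k}_{S}\ket{0}_{A}\}_{k=0}^{d-1}$ to an orthonormal basis. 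Orthogonality of the images is immediate, since $\ket{j,k-j}$ and $\ket{j',k'-j'}$ coincide only when $j=j'$ and $k=k'$, so the completion is routine.

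First I would verify energy preservation: the left-hand side above has energy $k\epsilon$, while each $\ket{j}_{S}\ket{k-j}_{A}$ on the right has energy $(j + (k-j))\epsilon = k\epsilon$. This cancellation is precisely where the equally-spaced hypothesis is used; for a non-linear spectrum the same prescription would not be energy conserving. Next I would compute the bipartite output $\sigma_{SA} = \sum_{k} p_{k}\ketbra{\phi_{k}}{\phi_{k}}$ with $\ket{\phi_{k}} = \mathbf{U}\ket{k}_{S}\ket{0}_{A}$, and trace out each side. A short rearrangement of sums gives
\begin{equation}
\sigma_{S} = \sigma_{A} = \sum_{j=0}^{d-1} q_{j}\, \ketbra{j}{j}, \qquad q_{j} = \sum_{k=j}^{d-1}\frac{p_{k}}{k+1},
\end{equation}
so both marginals are automatically diagonal in the energy eigenbasis.

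The final step is to check that these marginals are passive. Since they are energy-diagonal, passivity reduces to the monotonicity $q_{j} \ge q_{j+1}$, which follows from the telescoping identity $q_{j} - q_{j+1} = p_{j}/(j+1) \ge 0$. Hence no work can be extracted locally from either side, so the entire work content of $\rho$ is stored in the bipartite correlations of $\sigma_{SA}$.

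The main conceptual hurdle is finding the right form of $\mathbf{U}$: once the uniform superposition over fixed-total-energy product states is written down, energy preservation is forced by equal spacing, passivity of the marginals follows from nothing more than the non-negativity of the $p_{k}$'s, and the extension to a full unitary is automatic from orthonormality of the image vectors. The diagonality of $\rho$ is equally indispensable, because any coherence in the initial state would generate cross-terms in $\sigma_{SA}$ whose partial traces need not be energy-diagonal, let alone passive; isolating the subclass where the construction genuinely masks the work content is what makes the hypotheses of the proposition tight.
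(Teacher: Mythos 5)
Your proposal is correct and takes essentially the same route as the paper: the identical uniform-superposition unitary $\ket{k}_{S}\ket{0}_{A}\mapsto\frac{1}{\sqrt{k+1}}\sum_{j=0}^{k}\ket{j}_{S}\ket{k-j}_{A}$, the same marginal $q_{j}=\sum_{k\ge j}p_{k}/(k+1)$, and the same monotonicity-plus-energy-ordering argument for passivity. Your extra details (orthonormality of the image vectors and the telescoping identity $q_{j}-q_{j+1}=p_{j}/(j+1)$) merely spell out steps the paper labels as easy to check.
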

\begin{proof}
For any arbitrary $d$-dimensional quantum system governed by the linear Hamiltonian $H=\sum_{k=0}^{d-1}k\epsilon\ketbra{k}{k}$ we can construct a unitary acting globally on the system and machine state $\ket{0}$ of same dimension as follows  	 $\mathbf{U}:\ket{k0}\to\frac{\ket{0k}+\ket{1(k-1)}+...+\ket{k0}}{\sqrt{k+1}}, \forall_k\in\{0,...,(d-1)\}$. It is easy to check that the action of this unitary is energy preserving. Now let us consider the state  $\rho=\sum_{k=0}^{d-1}C_{k}\ketbra{k}{k}$, diagonal in energy basis, such that $\mathbf{U}(\sum_{k=0}^{d-1}C_{k}\ketbra{k}{k}\otimes\ketbra{0}{0})\mathbf{U}^{\dagger}=\sum_{k=0}^{d-1}C_{k}(\frac{\ket{0k}+\ket{1(k-1)}+...+\ket{k0}}{\sqrt{k+1}})(\frac{\bra{0k}+\bra{1(k-1)}+...+\bra{k0}}{\sqrt{k+1}})$. Due to the symmetric nature of the global system reduced marginals will be identical and given by $\sigma=\sum_{k=0}^{d-1}p_{k}\ketbra{k}{k}$, where $p_{k}=\sum_{j=k}^{d-1}\frac{C_{j}}{j+1}$. From the expression of $p_{k}$, evidently $p_{k+1}\le p_{k}$, whereas the energy corresponding to these two levels is such that $\epsilon_{k+1}\ge\epsilon_{k}$, thereby making the local marginals passive. It is obvious that due to the energy preserving constraint of the acting unitary, the work content of $\rho$ will be transferred to the final correlation between the system and and machine state.
\end{proof}
In the above we have shown using energy conserving unitary that it is not possible to mask the work content for an arbitrary quantum state. However for those states which are diagonal in the energy eigen basis it is possible to mask. Trivially we can extend this result for any arbitrary quantum state diagonal in a given basis, by transforming that basis in the energy eigen basis under unitary evolution. 
\begin{theorem}
	There is no universal work masking unitary.
\end{theorem}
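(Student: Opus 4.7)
My plan is to prove the theorem by contradiction, generalising Lemma~1 by dropping the energy-preservation hypothesis while retaining unitarity. I will work in the qubit setting with Hamiltonian $H = \epsilon \ket{1}\bra{1}$, since the extension to $d \geq 3$ is routine; the guiding idea is that unitarity together with the requirement of passive marginals on both subsystems is already too restrictive to accommodate every pure-state input simultaneously.

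The first step is to observe that if $\mathbf{U}$ is a universal work-masking unitary and $\ket{\psi}$ is any qubit pure state, then both reduced marginals of $\mathbf{U}\ket{\psi}\ket{0}$ must be diagonal in the energy eigenbasis with the ground-state population dominant. When the Schmidt coefficients are strictly ordered, the Schmidt basis on each side is uniquely determined by the corresponding marginal and must therefore coincide with the energy basis, so $\mathbf{U}\ket{\psi}\ket{0}$ takes the parallel form $\lambda_0 e^{i\alpha}\ket{00} + \lambda_1 e^{i\beta}\ket{11}$ with $\lambda_0 > \lambda_1$. When instead the output is maximally entangled, both marginals are $I/2$ and trivially passive, but the Schmidt bases on the two sides are free.

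Next I will specialise this characterisation to the two energy eigenstate inputs $\ket{0}$ and $\ket{1}$ and dispatch the resulting pairings case by case. If $\mathbf{U}\ket{00}$ and $\mathbf{U}\ket{10}$ are both strictly of the parallel type, the orthogonality $\mathbf{U}\ket{00}\perp\mathbf{U}\ket{10}$ forces the products of corresponding Schmidt coefficients to be equal, which is impossible under strict ordering. If one output is strictly parallel and the other is maximally entangled, the orthogonality condition forces the maximally entangled partner to take the antiparallel form $\tfrac{1}{\sqrt{2}}(e^{i\gamma}\ket{01}+e^{i\delta}\ket{10})$; expanding $\mathbf{U}\ket{\psi}\ket{0}$ by linearity for a generic Bloch state $\ket{\psi}=\cos(\theta/2)\ket{0}+e^{i\varphi}\sin(\theta/2)\ket{1}$ and computing the $\ket{0}\bra{1}$ entry of the reduced marginal on $A$ then yields an expression of the schematic form $Ae^{-i\varphi}+Be^{i\varphi}$ that must vanish for every $\varphi$, forcing both Schmidt amplitudes of the parallel output to vanish. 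Finally, if both outputs are maximally entangled, imposing diagonality of the marginal for every input phase forces the Schmidt vectors on $B$ for the two basis outputs to agree up to phases on each energy eigenvector; a judicious choice such as $\ket{+}$ or $\ket{-}$ then violates the passivity inequality $p_0 \geq p_1$ on the marginal.

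The principal obstacle is the maximally entangled pairing: here the Schmidt basis is not pinned down by the marginals, so the output form cannot be read off directly from passivity, and the contradiction has to be extracted by carefully varying the input phase $\varphi$ of $\ket{\psi}$. Once all three cases are dispatched, universal work masking is ruled out at the qubit level, and the extension to $d \geq 3$ follows by restricting the argument to any two-level energy subspace of the system Hamiltonian.
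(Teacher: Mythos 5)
Your proposal is correct, and it starts from the same structural step as the paper: local passivity of both marginals together with the equality of the two Schmidt spectra of a bipartite pure state forces every output onto the ``parallel'' form $c\ket{00}+d\ket{11}$ (up to phases), the only exception being a maximally entangled output whose marginals are $\tfrac{1}{2}\mathbb{I}$. Where you diverge is in how the contradiction is extracted. The paper applies this form to a generic input $\ket{\psi}=a\ket{0}+be^{i\phi}\ket{1}$ and to its orthogonal partner $\ket{\psi^{\perp}}$, and argues that unitarity forces the two outputs to carry interchanged Schmidt weights, so their marginals cannot both be passive; this is very short, but it tacitly sets aside the borderline possibility $|c|=|d|=\tfrac{1}{\sqrt{2}}$, where both ``flipped'' marginals are maximally mixed and hence passive. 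Your case analysis on the images of $\ket{0}\ket{0}$ and $\ket{1}\ket{0}$, with contradictions obtained by varying the input phase $\varphi$ (orthogonality of Schmidt weights in the doubly-parallel case, vanishing of the energy-basis coherence of a marginal in the mixed case, violation of $p_{0}\ge p_{1}$ in the doubly-maximally-entangled case), is longer but closes exactly that loophole: if both basis images were maximally entangled, linearity alone yields an input phase whose system marginal is population-inverted. So your route buys completeness where the paper buys brevity. The only cosmetic caveat is in the final case: depending on the relative phase between the two maximally entangled images, the violating input may have to be $\tfrac{1}{\sqrt{2}}(\ket{0}\pm i\ket{1})$ rather than $\ket{\pm}$, which your remark about carefully choosing the phase already accommodates; the claimed extension to $d\ge 3$ by restricting to a two-level energy subspace is as schematic as the paper's own one-line dismissal, and would deserve a sentence on why passivity of the full marginal restricts to passivity on the subspace.
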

\begin{proof}
We will prove the above statement for qubits which can be extended for higher dimensions. So, we want to prove the impossibility for the existence of a global unitary, $\mathbf{U}$, such that, $\forall\ket{\psi}\in\mathbb{C}^{2}, \mathbf{U}\ket{\psi}_{S}\ket{0}_{A}=\ket{\Psi}_{SA}$ and $\rho_{S/A}=\Tr_{A/S}(\ket{\Psi}\bra{\Psi})$ is passive in nature. The condition for local passivity restricts the local marginals to be diagonalized in energy eigen basis\cite{lenard}. Furthermore, for the bipartite state $\ket{\Psi}_{SA}$, the marginals will be of same eigen values, so the expected form for $\ket{\Psi}_{SA}$ is {\it either}, $c\ket{00}+d\ket{11}$, {\it or} $c\ket{01}+d\ket{10}$. However, the last one can be excluded, otherwise both the marginals can not be passive except for $|c|=|d|=\frac{1}{\sqrt{2}}$. So, the action on an arbitrary $\ket{\psi}=a\ket{0}+be^{i\phi}\ket{1}$ will be,\\
$\mathbf{U}(a\ket{0}+b e^{i\phi}\ket{1})\ket{0}=c\ket{00}+de^{i\tilde{\phi}}\ket{11}$,\\
 where in general $c$ and $\tilde{\phi}$ are functions of both $a$ and $\phi$. Local passivity of the final joint state demands that $\frac{1}{2}\leq|c|^{2}\leq1$. The action of same unitary on $\ket{\psi^{\perp}}\ket{0}$ will be $\mathbf{U}\ket{\psi^{\perp}}\ket{0}=\ket{\Psi^{\perp}}$, where $\ket{\Psi^{\perp}}=-d\ket{00}+e^{i\tilde{\phi}}c\ket{11}$. It is obvious that the local marginals for $\ket{\Psi^{\perp}}$ have exactly been flipped with respect to those of $\ket{\Psi}$. Therefore, both of them can not be locally passive simultaneously which implies that there is no universal work masking unitary.
\end{proof}
\textbf {Corollary 2.1. }{\it No qubit channel can map both the system and ancillary qubit along two Bloch radii.}
\begin{proof}
Let us suppose that there exists a joint unitary $\mathbf{V}$ acting on both the system and ancillary qubit as, $\mathbf{V}\ket{\psi^{k}_{S}}\ket{0_{A}}=\ket{\Psi^{k}_{SA}}$, such that, $\rho^{k}_{S}=p_{k}\ketbra{0}{0}+(1-p_{k})\ketbra{1}{1}$ and $\rho^{k}_{A}=q_{k}\ketbra{\phi}{\phi}+(1-q_{k})\ketbra{\bar{\phi}}{\bar{\phi}}$, where without loss of generality $\frac{1}{2}\le p_{k}, q_{k}\le 1$. But as the state $\ket{\Psi^{k}_{SA}}$ is pure, the spectrum for the marginals should be same and hence we assume $p_{k}=q_{k}$. Then we can apply another unitary $\mathbf{u}$ on the machine qubit such that, $\mathbf{u}:\{\phi, \bar{\phi}\}\to\{0,1\}$. As a consequence $(\mathbb{I}_{S}\otimes\mathbf{u}_{A})\mathbf{V}$ would be the combined unitary which takes the state $\ket{\psi^{k}_{S}}\ket{0_{A}}$ to $\sqrt{p_{k}}\ket{00}+\sqrt{1-p_{k}}\ket{11}$. This contradicts  Theorem 2 since the state so obtained is locally passive in nature. Note that the reduced dynamics on the system qubit for a global unitary can be realized as a qubit channel.
\end{proof} 
\textbf{Corollary 2.2. }{\it Quantum information can not be masked.} 
\begin{proof}
	Quantum no masking theorem \cite{nomasking}, which has a connection to the impossibilities of $(2,2)$ secret sharing in \cite{Cleve'PRL}, states that it is impossible to construct a unitary $\mathbf{U}$ such that $\mathbf{U}\ket{\psi_{S}^{(k)}}\ket{0_{A}}=\ket{\Psi_{SA}^{(k)}}$ where $\rho_{S(A)}=Tr_{A(S)}[\ketbra{\Psi_{SA}^{(k)}}{\Psi_{SA}^{(k)}}], \forall_k$, i.e., the quantum information of an arbitrary quantum state can not be masked in the correlations of extended bipartite system. However, Theorem 2 and Corollary 2.1 imply that it is impossible to mask the qubit along two Bloch radii in its bipartite extension. Hence quantum no masking turns out as an obvious consequence of our result.
\end{proof}
\textbf{Corollary 2.3.} {\it Existence of non-unitary work masker on the system-machine qubits.} 

\begin{proof}
The general action on the system and the machine state will be a CPTP map, which can be visualized as a reduced dynamics obtained from the action of a global unitary on the system, machine and some additional ancillas. To prove the generic existence of such a work masker, let us consider a global unitary $\mathbb{U}\ket{0}_{S}\ket{0}^{\otimes 3}_{A}=\ket{\phi^{+}}^{\otimes 2}$ and $\mathbb{U}\ket{1}_{S}\ket{0}^{\otimes 3}_{A}=\ket{\phi^{-}}^{\otimes 2}$, such that, $\mathbb{U}(a\ket{0}+b \ket{1})_{S}\ket{0}^{\otimes 3}_{A}= (a\ket{\phi^{+}}^{\otimes 2}+b\ket{\phi^{-}}^{\otimes 2})_{SA}$. Now a simple calculation can show that the marginals of joint state for each particles is $\frac{I}{2}$, irrespective of the values of $a$ and $b$. Hence the amount of work stored in the system is completely masked in the correlations shared among its constituents. This protocol can be easily related to error correcting codes \cite{errorshor,grassl}.
	\end{proof}
 \section{Conclusion}
 In summary, we have tried to characterize the impossibilities of certain thermodynamic operations in the quantum domain. More formally, several information theoretic {\it no-go} results in quantum theory are reformulated in terms of thermodynamic quantities to study their status from the perspective of quantum theory. Just like cloning of an unknown quantum state, copying its work content (which means a constraint equation on the parameters of the arbitrary quantum state) is also forbidden. This is one particular case of quantum partial cloning in the sense that between the qubit parameters $(\theta,\phi)$ the information in $\theta$ can not be cloned.
Further, it is interesting to characterize the parametric class of a general quantum state which can be cloned without disturbing any other fundamental principles.
 Although classical theory also exhibits no cloning, in the sense that given a single realization of a random variable it is impossible to figure out the probability distribution corresponding to that variable, classical broadcasting is possible. However, this opens up the possibility to clone the energy content of a given quantum state, which we have studied in Proposition 1. Furthermore, we have shown the impossibility of masking the work content of a quantum state, and it is observed that {\it quantum no masking} principle follows as an obvious consequence of it. Besides, the certainty to achieve the work masking with atleast four parties can be obtained from error correcting codes. It should be mentioned that our last theorem has some relevance from the perspective of qubit channels. More precisely, qubit channels can map the Bloch sphere to other different smooth convex regions inscribed inside the Bloch sphere itself \cite{braun}. However if we consider the dynamics of not only the system but also of the environment state, our result invokes a restriction on these allowed regions. Therefore, another direction of future study would be to modify the allowed domains of quantum channels while keeping a track of the evolved ancilla. Although our results deal with a particular kind of work stored in a quantum state, it opens up the possibilities to study other kinds of work as mentioned in the introduction. Furthermore, it is interesting to establish a connection of our results with some of the already existing physical principles, which may help to interpret these prohibitions from more fundamental perspectives.
 \section{Acknowledgment}
 M.A. would like to acknowledge the CSIR project 09/093(0170)/2016-EMR-I for financial support.
               
\end{document}